\theoremstyle{definition}
\theoremstyle{remark}
\newcommand{\comment}[1]{}
\begin{document}
\title{Establishing the Price of Privacy in Federated Data Trading}
%
%
\author{Kangsoo Jung\inst{} \and
Sayan Biswas\inst{} \and
Catuscia Palamidessi\inst{}}
\authorrunning{}
%
\institute{Inria  and Ecole Polytechnique, France\\
\email{\{gangsoo.zeong,sayan.biswas\}@inria.fr}\\
\email{catuscia@lix.polytechnique.fr}
\url{}}

\maketitle               
\begin{abstract}
Personal data is  becoming one of the most essential resources in today's information-based society. Accordingly, there is a growing interest in data markets, which operate data trading services between data providers and data consumers. One issue the data markets have to address is that of the  potential threats to privacy. Usually some kind of protection must be provided, which generally comes to the detriment of utility. A  correct pricing mechanism for private data should therefore depend on the level of privacy. In this paper, we propose a model of data federation in which data providers, who are, generally, less influential on the market than data consumers, form a coalition for trading their data, simultaneously shielding against privacy threats by means of differential privacy. Additionally, we propose a technique to price private data, and an revenue-distribution mechanism to distribute the revenue fairly in such federation data trading environments. Our model also motivates the data providers to cooperate with their respective federations, facilitating a fair and swift private data trading process. We  validate our result  through various experiments, showing  that the proposed methods provide   benefits to both data providers and consumers.

\keywords{Data Trading \and Federated Data Market \and Differential Privacy \and Revenue Splitting Mechanism \and Game Theory.}
\end{abstract}
\section{Introduction}

The use of data analytics 
is growing, as it plays a crucial role 
in making decisions and identifying social and economical strategies. Not all data, however, are equally useful, and  
the availability of accurate data is crucial for obtaining high-quality  analytics.
In line with this trend,  data are considered an asset and   commercialized, and  data markets, such as Datacoup\cite{Datacoup84:online} and Liveen\cite{LiveenBl14:online},
are on the rise. 

 Unlike traditional  data brokers, data markets provide  a  direct data trading service between data providers and data consumers. Through data markets, data providers can be informed of the value of their private data, and data consumers can  collect and process personal data directly at  reduced costs, as intermediate entities are not needed in this model. 

Two important issues that need to be addressed for the success of such data markets are (a) the prevention of privacy violation, and (b) an appropriate pricing mechanism for personal data. Data owners are increasingly aware of the privacy risks, and are less and less inclined to expose their sensitive data without proper guarantees. If the data market cannot be trusted concerning the   protection of the sensitive information, the data providers will not be willing to trade their data. 
For example, Cambridge Analytica collected millions of Facebook users' profiles under the pretext of using them for academic purposes, while in reality they used this information to  influence  the 2016 US presidential election \cite{hinds2020wouldn}. When media outlets broke news of Cambridge Analytica's business practices,  many Facebook users felt upset  about the misuse of their data and left Facebook. 

Differential privacy \cite{dwork2014algorithmic}  can prevent exposure of personal information while preserving statistical utility, hence it is a good candidate  to protect privacy in the data market. Another benefit of  differential privacy is that it  provides  a metric, i.e., the  parameter
$\epsilon$, which represents the amount of obfuscation, and therefore the level of privacy and utility of the sanitized data. Hence $\epsilon$ can be used directly to establish the price of personal data as a function of the level of privacy protection desired by an individual.

We envision  a data trading framework in which groups of data providers ally to form federations in order 
to increase their  bargaining  power, following the traditional model of trade unions. 
At the same time, federations guarantee that the members respect their engagement concerning the trade. 
Another important aspect  of the   federation  is that the value of the collection of all data is usually different from the sum of the values of all members' data. It  could be larger, for instance because  the accuracy of the statistical analyses increases with the size of the dataset, or could be smaller, for instance because of some discount offered by the federation. 
Data consumers are supposed to make a collective deal with a federation rather than with the individual data providers, and, from their perspective,  
this approach can be more reliable and efficient than dealing with individuals. Thus, data trading through federations can benefit  both  parties.

Given such a scenario, two questions are in order: 
\begin{enumerate}
\item{ How is the price of data determined in a federation environment?}
\item { How does the federation fairly distribute the  earnings to its members?}
\end{enumerate}

In this paper, we consider these issues, and we provide the following contributions:  

\begin{enumerate}
\item{ We propose a method to determine the price of collective data based on the differential privacy
metric.}
\item {We propose a distribution model based on game theory. More precisely, we borrow the
notion of Shapley value~\cite{winter2002shapley, roth1988shapley} from the theory of cooperative games. This is a method to determine the contribution of each participant to the payoff, and we will use it 
to ensure that each member of the federation receives a compensation according to his contribution.}
\end{enumerate}

The paper is organized as follows:  Sections 2 
 recalls some basic notions about differential privacy and Shapley values.  Section  3  summarizes related works. Section 4  describes the federation-based data trading and our proposal for the distribution of the earnings.  Section 5 validates the proposed technique through  experiments. Section 6    concludes and discusses  potential directions for future work.

\section{Preliminaries}
In this section, we recall the basics about differential privacy and Shapley values.
\subsection{Differential privacy}
Differential privacy (DP) is a method  to ensure  privacy
on  datasets  based on obfuscating the answers to queries. It is parametrized by   $\epsilon\in \mathbb{R}^+$, that represents the level of privacy. 
We recall that  two datasets  $D_1$ and $D_2$ are 
neighboring if they differ by only one record.

\begin{definition}[Differential privacy\text{\cite{dwork2014algorithmic}}]
\label{dp}
A randomized function $\mathcal{R}$ provides \emph{$\epsilon$-differential privacy} if for all neighboring datasets   $D_1$ and $D_2$  and all $ S \subseteq$ Range($\mathcal{R}$), we have
\begin{equation*}
\mathbb{P}[\mathcal{R}(D_1) \in S] \leq e^{\epsilon} \times  \mathbb{P}[\mathcal{R}(D_2) \in S]
\end{equation*}
\end{definition}

For example, if we have $\mathbb{D}$ as the space of all datasets, and some $m\in \mathbb{N}$, then the randomized function $\mathcal{R}:\mathbb{D}\mapsto\mathbb{R}^m$ could be such that $\mathcal{R}(D) = \mathcal{Q}(D)+X$, where $\mathcal{Q}$ is a statistical query function executed on $D$, such as the counting or histogram query, and $X$ is some added noise to the true query response. For $\Delta_{\mathcal{Q}}=\max\limits_{D,D'\in\mathbb{D}}|\mathcal{Q}(D)-\mathcal{Q}(D')|$, if $X\sim\text{Lap}(0,\frac{\Delta_{\mathcal{Q}}}{\epsilon})$, $\mathcal{R}$ will guarantee $\epsilon$-DP.

DP is typically implemented by adding controlled  random noise to the true answer to the query before reporting the result.  $\epsilon$ is a positive real number parameter, and the value of $\epsilon$ affects the amount of privacy, which decreases as  $\epsilon$ increases. For simplicity of discussion, we focus on the non-interactive and pure $\epsilon$-differential privacy.

Recently, a local variant of differential privacy (LDP), in which the data owner directly obfuscate their data, has been proposed\cite{erlingsson2014rappor}. This variant considers the individual data points (or records), rather than queries on datasets.
Its definition  is as follows:
\begin{definition}[Local differential privacy\cite{erlingsson2014rappor}]
A randomized function  $\mathcal{R}$ satisfies $\epsilon$-\emph{local differential privacy} if, for all pairs of individual data $x$ and $x'$,  and for any subset $S \subseteq \text{Range}(\mathcal{R})$, we have
\begin{equation*}
   \mathbb{P}[\mathcal{R}(x) \in S] \leq e^{\epsilon} \cdot  \mathbb{P}[\mathcal{R}(x')] \in S,   
\end{equation*}
\label{ldp}
\end{definition}

When the domain of data points is finite, one of the simplest and  most used mechanisms for LDP is  $k$RR\cite{kairouz2016discrete}. 
In this paper, we assume that all data providers use this mechanism to obfuscate their data.
\begin{definition}[$k$RR Mechanism \cite{kairouz2016discrete}]
Let $\mathcal{X}$ be an alphabet of size $k < \infty$. For a given privacy parameter $\epsilon$, and given 
an input $x\in \mathcal{X}$, the $k$RR mechanism  
returns $y\in \mathcal{X}$ with probability:
\[\mathbb{P}(y|x)\;\; = \;\;\frac{1}{k-1+e^{\epsilon}} \begin{cases}
e^{\epsilon}, & \mbox{if } y=x \\
1, & \mbox{if } y \neq x
\end{cases}\]
\label{krr}
\end{definition}

\subsection{Shapley value}

When participating in data trading through a federation, \emph{Pareto efficiency} and \emph{symmetry} are the important properties for the intra-federation earning distribution. Pareto efficiency means that at the end of the distribution process, no change can be made without making participants worse off. Symmetry means that all players who make the same contribution must receive the same share. Obviously, the  share should vary according to the member's contribution  to the collective data.

The Shapley value\cite{winter2002shapley, roth1988shapley} is a concept from game theory named in honor of Lloyd Shapley, who introduced it. Thanks to this achievement, Shapley won the Nobel Prize in Economics in 2012.  
The Shapley value applies to cooperative games, and it 
is a method  to distribute the total gain that satisfy Pareto efficiency, symmetry, and differential distribution according to a player's contribution. 
Thus, all participants have the advantage of being fairly incentivized.
The solution based on the Shapley value 
is unique. Due to these properties, the Shapley value is regarded as an excellent approach to design a distribution method. 

Let $N=\{1,\ldots,n\}$ be a set of players involved in a cooperative game and $M\in\mathbb{R}^+$ be a financial revenue from the data consumer. Let $v: 2^N\mapsto\mathbb{R}^+$ be the characteristic function, mapping each subset  $S\subseteq N$ to  the total expected sum of payoffs the members of $S$ can obtain by cooperation. (i.e., $v(S)$   is the total collective payoff of the players in $S$). According to the Shapley value, the benefit received by player $i$ in the cooperative game is given follows: 

\[ \psi_i(v, M) \;\; =\;  \sum_{S \subseteq N\setminus\{i\}} \frac{|S|!\times(n-|S|-1)!}{n!}(v(S \cup \{i\} )-v(S)) \]

We observe that $v(A) > v(B)$ for any subsets $B \subset A \subseteq N$, and hence, $v(S \cup \{i\} )-v(S)$ is positive. We call this quantity the \emph{marginal contribution} of player $i$ in a given subset $S$. Note that $\psi_i(v, M)$ is the expected marginal contribution of player $i$ over all subsets $S \subseteq N$. 

In this paper, we use the Shapley value to distribute the earnings according to the contributions of the data providers in the federations.
 
\section{Related works}
Data markets, such as Datacoup\cite{Datacoup84:online} and Liveen\cite{LiveenBl14:online}, need to  provide privacy protection in order to encourage the data owners to participate. 
One of the key questions is how to appropriately price data obfuscated by
a privacy-protection mechanism. 
When we use differential privacy, the accuracy of data depends on the value of the noise parameter $\epsilon$, which determines the privacy-utility trade-off. Thus, this question is linked to the problem of how to establish the value of  $\epsilon$. Researchers have debated how to choose this value since the introduction of differential privacy, and there have been several proposals  \cite{tang2017privacy, lee2011much,domingo2015t,holohan2017k}. In particular, \cite{lee2011much} showed that the privacy protection level by an arbitrary $\epsilon$ can be infringed by inference attacks, and it proposed a method for setting $\epsilon$ based on the posterior belief.  \cite{domingo2015t} considered the relation between differential privacy and t-closeness, a notion of group privacy which prescribes that the earth movers distance between the distribution in any  group $E$ and the distribution in the whole dataset does not exceed the threshold $t$, and showed that  both $\epsilon$-differential privacy and t-closeness are satisfied  when the $t=\max_E\frac{|E|}{N}\left(1+\frac{N-|E|-1}{|E|})e^{\epsilon}\right)$ where  $N$ is the number of records of the database. 

Several other works have studied how to price the data according to the value of $\epsilon$ \cite{hsu2014differential, ghosh2011selling, roth2012buying, fleischer2012approximately, nget2017balance,li2014theory,zhang2021differential,jung2019privacy}. The purpose of these studies is to determine the price and value of the $\epsilon$ according to the data consumer’s budget, accuracy requirement of information, the privacy preference of the data provider, and the relevance of the data. In particular, the study in \cite{zhang2021differential} assumed a dynamic data market and proposed an incentive mechanism for data owners to truthfully report their privacy preferences. In \cite{nget2017balance}, the authors proposed a framework to find the balance between financial incentive and privacy in personal data markets where data owners sell their own data, and suggested the main principles to achieve reasonable data trading. Ghosh and Roth  \cite{ghosh2011selling} proposed a pricing mechanism based on auctions that maximizes the data accuracy under the budget constraint or minimizes the budget for the fixed data accuracy requirement, where data is privatized with differential privacy.

Our study differs from previous work in that, unlike the existing approaches assuming a one-to-one data trading between data consumers and  providers, we consider trades between a data consumer and a federation of data providers.   In such a federated  environment, the questions are  (a) how to determine the price of the collective  data according to the privacy preferences of each member, and (b) how to determine the individuals' contribution
to the overall data value, in order to receive a                                                                                                                         share of the earnings  accordingly.

In this paper, we estimate the value of $\epsilon$ for the  $k$RR mechanism \cite{kairouz2016discrete}, and we fairly distribute the earnings to the members of the federations  using the Shapley value. We propose a valuation function that fits the characteristics of differential privacy. For example, increasing value of $\epsilon$ does not infinitely increase the price  (we will elaborate on this in section 4). Furthermore, we characterize the conditions required for setting up the earning distribution schemes.

\section{Differentially Private Data Trading Mechanism}
\subsection{Mechanism outline}
    \paragraph{Overview: }We focus on an environment with multiple federations of data providers and one data consumer who interacts with the federations in order to obtain information (data obfuscated using $k$RR mechanism with varying values of $\epsilon$) in exchange of financial revenues. We assume that federations and  consumer are aware that the data providers use $k$RR mechanism, independently and with their desired privacy level (which can differ  from provider to provider). Our method provides a sensible way of splitting the earnings using the Shapley value. In addition, it also motivates an individual to cooperate with the federation she is a part of, and penalises   intentional and recurring non-cooperation. 
    \paragraph{Notations and set-up: } Let $\mathcal{F}=\{F_1,\ldots, F_k\}$ be a set of $k$ federations of data providers, where each federation $F_i$ has $n_{F_i}$ members for each $i \in \{1,\ldots,k\}$. For a federation $F\in\mathcal{F}$, let its members be denoted by $F=\{p^F_1,\ldots,p^F_{n_F}\}$. And finally, for every federation $F$, let $p^F_* \in F$ be an elected representative of $F$ interacting with the data consumer. This approach to communication benefits both the data consumer and the data providers because (a) the data consumer minimizes her communication cost  by interacting 
    with just one representative of the federation, and (b) the reduced communication induces an additional layer of privacy.

        We assume that each member $p$
        of a federation $F$ has  a maximum privacy threshold $\epsilon^T_p$ with which she, independently, obfuscates her data using the $k$RR mechanism. We also assume that $p$ has $d_p$ data points to potentially report. 
        
        We know from Equation (18) of \cite{Ehab2021privacy} that if there are $m$ data providers reporting $d_1,\ldots,d_m$ data points, independently  privatizing them using $k$RR mechanism with the privacy parameters $\epsilon_1,\ldots, \epsilon_m$, the federated data of all the $m$ providers also follow a $k$RR mechanism with the privacy parameter being:
        \[
        \epsilon=\ln{\left(\frac{\sum_{i=1}^m d_i}{\sum_{i=1}^m \;\frac{d_i}{k-1 + e^{\epsilon_i}}}\; +1-k\right)}.
        \]
        We call the quantity $d_p\epsilon^T_p$  the \emph{information limit} of data provider $p\in F$, and 
        \begin{equation}\label{eq:maxinfothreshold}
            \epsilon^T_F=\ln{\left(\frac{\sum_{p\in F} d_p}{\sum_{p\in F} \;\frac{d_p}{k-1 + e^{\epsilon^T_p}}}\; +1-k\right)}
        \end{equation}
        the \emph{maximum information threshold of the federation} $F$.
        
        We now introduce the concept of 
        \emph{valuation function} 
        $f(.)$, that maps financial revenues to  information, representing the amount of information to be obtained for a given price. It is reasonable to require that $f(.)$ is strictly monotonically increasing and continuous. 
        In this work we focus on the effect on the privacy parameter, hence we 
        regard the collection of data points as a constant, and assume that only $\epsilon$ can vary. We will call $f(.)$  the \emph{privacy valuation function}.

    \begin{definition}[Privacy valuation function]
        A function $f: \mathbb{R^+}\mapsto\mathbb{R^+}$ is a \emph{privacy valuation function} if $f(.)$ is strictly monotonically increasing and continuous.
        \label{privvalfn}
    \end{definition}
 
    As $f(.)$ is strictly monotonically increasing and continuous, it is also invertible. We denote the inverse of $f(.)$ as $f^{-1}(.)$, where $f^{-1}: \mathbb{R^+}\mapsto\mathbb{R^+}$, maps a certain privacy parameter $\epsilon$ to the financial revenue evaluated with selling data privatized using $k$RR mechanism with $\epsilon$ as the privacy parameter.
    
    As $f(.)$ is essentially determining the privacy parameter of a differentially private mechanism ($k$RR, in this case), it is reasonable to assume that $f(.)$ should be not only  increasing, but also  increasing exponentially for a linear increase of money. In fact, when $\epsilon$ is high, it hardly makes any difference to further increase its value. For example, when $\epsilon$ increases from 200 to 250, it practically makes no difference to the data as they were already practically no private. On the other hand, if we increase $\epsilon$ from 0 to 50, it creates a huge difference, conveying much  more information. Therefore, it makes sense to set  $f(.)$ to increase exponentially with a linear increase of the financial revenue.

    An example of a privacy valuation function that we consider in this paper is $f(M)=K_1(e^{K_2M}-1)$, taking the financial revenue $M\in\mathbb{R}^+$ as its argument, satisfying the reasonable assumptions of evaluating the differential privacy parameter that should be used to privatize the data in exchange of the financial revenue of $M$. Here the parameters $K_1\in \mathbb{R}^+$ and $K_2\in \mathbb{R}^+$ are decided by the data consumer according to her requirements.
    
\begin{figure}[htbp]
\centerline{\includegraphics[width=0.7\textwidth]{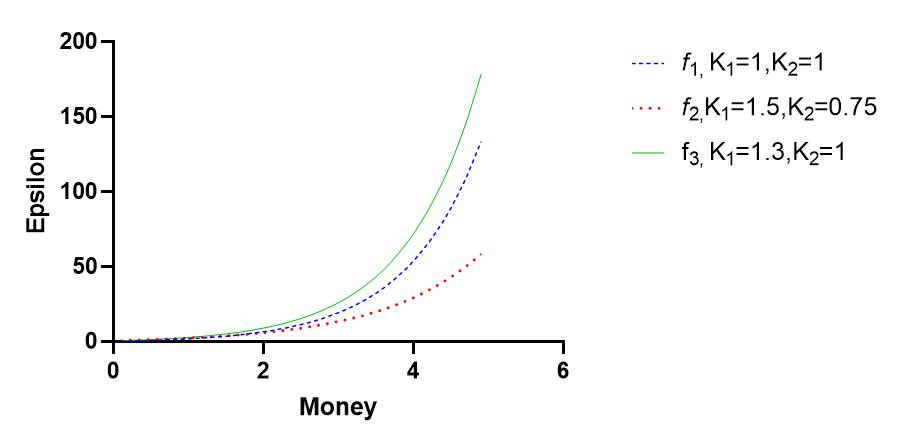}}
\caption{Some examples of the privacy valuation function $f(.)$ illustrated with different values of $K_1$ and $K_2$. The data consumer decides the values of the parameters $K_1$ and $K_2$ according to her requirement, and broadcasts the determined function to the federations.}
\label{fig1}
\end{figure}

    \paragraph{Finalizing and achieving the deal: }Before the private-data trading commences, the data consumer, $D$, truthfully broadcasts her financial budget, $\$B$, and a privacy-valuation function, $f(.)$, chosen by her to all the federations. At this stage, each federation computes their maximum privacy threshold. In particular, for a federation $F$ with members $F=\{p_1,\ldots,p_n\}$, and a representative $p_*$, $p_i$ reports $d_{p_i}$ and $\epsilon^T_{p_i}$ to $p_*$ for all $i \in \{1,\ldots, n\}$. $p_*$ computes the maximum information threshold of federation $F$, $\epsilon^T_F$, as given by \eqref{eq:maxinfothreshold}.

    At this point,  $p_*$ places a bid to $D$ to obtain \$$M$, which maximises the earning for $F$ under the constraint of their maximum privacy threshold and the maximum budget available from $D$, i.e., $p_*$ wishes to maximize $M$ within the limits $M\leq B$ and $f(M) \leq \epsilon^T_F$. Thus, $p_*$ bids for sending data privatized using the $k$RR mechanism with $\epsilon^T_F$ in exchange of $f^{-1}(\epsilon^T_F)$. 
    
    At the end of this bidding process by all the federations, $D$ ends up with $\epsilon=\{\epsilon^T_{F_1},\ldots,\epsilon^T_{F_k}\}$, the maximum privacy thresholds of all the federations. At this stage $D$ must ensure that $\sum_{i=1}^k f^{-1}(\epsilon^T_{F_i}) \leq B$, adhering to her financial budget. In all probability, $\sum_{i=1}^k f^{-1}(\epsilon^T_{F_i})$ is likely to exceed $B$ in a realistic setup. Here, $D$ needs a way to ``seal the deal'' with the federations staying within her financial budget, maximizing her information gain, i.e., maximizing $\sum_{i=1}^kd_{F_i}\epsilon_{F_i}$, where $d_{F_i}$ is the total number of data points obtained from the $i^{th}$ federation $F_i$, and $\epsilon_{F_i}$ is the overall privacy parameter of the $k$RR differential privacy with the combined data of all the members of $F_i$. 
    
    A way $D$ could   finalize the deal with the federations is by proposing to receive information obfuscated with $w^*\epsilon^T_{F_i}$ using $k$RR mechanism to $F_i$ $\forall i \in \{1,\ldots,k\}$, where \[w^*=\max\left\{w: \sum_{i\in\{1,\ldots,k\}}f^{-1}(w\epsilon^T_{F_i}) \leq B, w\in [0,1]\right\},\]
    i.e., proportional to every federation's maximum privacy threshold ensuring that the price to be paid to the federations is within $D$'s budget. Note that $w\in [0,1]$ guarantees that $w\epsilon^T_{F} \leq \epsilon^T_F$ for every federation $F$, making the proposed privacy parameter possible to achieve by every federation, as it's   within their respective maximum privacy thresholds. Let the combined privacy parameter for federation $F_i$ proposed by $D$ to successfully complete the deal be denoted by $\epsilon^P_{F_i}=w^*\epsilon^T_{F_i}$ $\forall i \in \{1,\ldots,k\}$ which is the privacy level \emph{promised} to be achieved by each federation to participate in the data market.
    
    The above method to scale down the maximum privacy parameters to propose a deal, maximizing $D$'s information gain, is just one of the possible approaches.  In theory, any method that ensures the total price to be paid to all the federations, in exchange of their data, is within $D$'s budget, and the privacy parameters proposed are within the corresponding privacy budgets of the federations, could be implemented to propose a revised set of privacy parameters and, in turn, the price associated with them. 
    
    \begin{definition}[Seal the deal]
       When all the federations are informed about the revised privacy parameters desired of them, and they agree to proceed with the private-data trading with the data consumer by achieving the revised privacy parameter by combining the data of their members, we say \emph{the deal has been sealed} between the federations and the data consumer.
    \end{definition}
    
    Once the deal is sealed between the federations and the data consumer, $F_i$ is expected to provide data gathered from its members with an overall obfuscation with the privacy parameter $\epsilon^P_{F_i}$ using the $k$RR mechanism, in exchange of a price  $M^i=f^{-1}(\epsilon^P_{F_i})$ for every $i \in \{1,\ldots,k\}$. Failing to achieve this parameter of privacy for any federation results in a failure to uphold the conditions of the ``deal'' and makes the deal  void for that federation, with no price received. 
    
    A rational assumption made here is that if a certain federation $F$ fails to gather data from its members such that the overall $k$RR privacy parameter of $F$ is less than $\epsilon^P_F$, then $F$ doesn't receive any partial compensation for its contribution, as it would incur an increase in communication cost and time for the data consumer in proceeding to this stage and ``seal a new deal'' with $F$, instead of investing the revenue to a more responsible federation.
    
    The rest of the process consists in collecting the data and it takes place within 
    every federation  $F$ which has sealed the deal. At the $t^{th}$ round, for $t\in \{1,2,\ldots\}$, any member $p$   of $F$   has the freedom of contributing $d^{t}_p \leq d_p - \sum_{i=1}^{t-1}d^{i}_p$ data points privatized using $k$RR mechanism with any parameter $\epsilon^t_p$. The process continues until the overall information collected until then achieves a privacy level of of at least $\epsilon^P_F$. Let $\cal T$ denote the number of rounds needed by $F$ to achieve the required privacy level. As per the deal sealed between $F$ and $D$,  $F$ needs to submit $D_F=\sum_{p \in F}\sum_{i=1}^{\cal T}d^i_p$ data points to $D$ such that the overall $k$RR privacy level of the collated data, $\epsilon_F$, is at least $\epsilon^P_F$, and in return $F$ receives a financial revenue of $\$M$ from $D$.

    \subsection{Earning Splitting}
   We  use the Shapley value to estimate the contribution of each data provider of the federation, in order to split the whole earning $M$, which $F$ would receive from $D$ at the end of the trade. Let $\psi:\mathbb{R}^+\times\mathbb{R}^+ \mapsto \mathbb{R}^+$ be the valuation function used for evaluating the Shapley values of the members after each contribution. If a certain member, $p$, of $F$ reports $d$ differentially private data points with privacy parameter $\epsilon$, $\psi_i(v)$ should give the share of ``contribution'' made by $p$ over the total budget, $M$, of $F$, to be split across all its members. It is assumed that each member, $p$, of $F$ computes her Shapley value, knows what share of revenue she would receive by contributing her data privatized with a chosen privacy parameter, and uses this knowledge to decide on $\epsilon^t_p$ at every round $t$, depending on her financial desire. In our model, characteristic function $v(S)$ is as follows:
    \[ v(S)=
\begin{cases}
M, & \mbox{if }	 \epsilon_F \geq \epsilon_{F}^P \\
0, & \mbox{if }  \epsilon_F < \epsilon_{F}^P
\end{cases} \]
where $n$ is the number of data provider in subset $S$ .

\begin{example}
As an example, let us assume that there are  $p_1$, $p_2$, $p_3$, and each provider's contribution $\sum_{t=1}^{\cal T}d^t_p\frac{e^{\epsilon^t_{p}}}{k-1 + e^{\epsilon^t_{p}}}$ are $1.0$, $0.5$ and $0.3$. And we assume that $\epsilon_F^P$ is 1.4 and financial revenue of $M$ is 60. In this case, the calculation of each provider's revenue using Shapley value is as follows:\\
\\
Case 1) Only one data provider participates:
\begin{center}
$p_1: v(p_1)=0$\\
$p_2: v(p_2)=0$\\
$p_3: v(p_3)=0$ 
\\
\end{center}
Case 2) Two providers participate: $v(p_1+)$=0,$v(p_2)$=0,
\begin{center}
$p_1: v(p_1+p_2)-v(p_2)=M,  v(p_1+p_3)-v(p_3)=M$ \\
$p_2: v(p_1+p_2)-v(p_1)=M,  v(p_2+p_3)-v(p_3)=0$\\
$p_3: v(p_1+p_3)-v(p_1)=0,  v(p_2+p_3)-v(p_2)=0$
\\
\end{center}
Case 3) All providers participate:
\begin{center}
$p_1: v(p_1+p_2+p_3)-v(p_2+p_3)=M$ \\
$p_2: v(p_1+p_2+p_3)-v(p_1+p_3)=M$ \\
$p_3: v(p_1+p_2+p_3)-v(p_1+p_2)=0$ \\

\end{center}
According to the above results, the share of each user, according to their Shapley values, is as follows:
\begin{center}
$\psi_1(v)=\frac{0!2!}{3!}0+\frac{1!1!}{3!}M+\frac{1!1!}{3!}M+\frac{2!0!}{3!}M=\frac{4M}{6}$=40\\
$\psi_2(v)=\frac{0!2!}{3!}0+\frac{1!1!}{3!}M+\frac{1!1!}{3!}0+\frac{2!0!}{3!}M=\frac{2M}{6}$=20\\
$\psi_3(v)=\frac{0!2!}{3!}0+\frac{1!1!}{3!}0+\frac{1!1!}{3!}0+\frac{2!0!}{3!}0=\frac{0M}{6}$=0\\
\end{center}

In this example, $p_3$ has no effect on achieving the $\epsilon_F^P$ . Thus, $p_3$ is excluded from the revenue distribution. If the revenue were distributed proportionally, without considering the Shapley values, the revenue of $p_1$ would be 33, $p_2$ is 17, and $p_3$ is 10. It would mean $p_1$ and $p_2$ would receive lower revenues even though their contribution are sufficient to achieve the $\epsilon_F^P$, irrespective of the participation of $p_3$. The Shapley value enables the distribution of revenues only for those who have contributed to achieving the goal.
\end{example}

One of the problems of computing the Shapley values is the high computational complexity involved. If there is a large number of players, i.e., the size of a federation is large, the total number of subsets to be considered becomes considerably large, engendering a limitation to real-world applications. To overcome this, we use a \emph{pruning technique} to reduce the computational complexity of the mechanism. A given federation $F$ receives revenue $M$ only when
$\epsilon_F \geq \epsilon_F^P$, as per the deal sealed with the data consumer. Therefore, it is not necessary to calculate for Shapley values for the cases where $\epsilon_F < \epsilon^P_F$, since such cases do not contribute towards the overall Shapley value evaluated for the members of $F$.  

It is reasonable to assume this differentially private data trading between the data consumer and the federations would continue periodically for a length of time. For example, Acxiom, a data broker company, periodically collects and manages personal data related to daily life, such as consumption patterns and occupations. Periodic data collection has higher value than one-time data collection because it can 
track temporal trends. For simplicity of explanation, let's assume that the trading occurs ever year. Hence, we consider a yearly period to illustrate the final two steps of our proposed mechanism - ``swift data collection'' and the ``penalty scheme''. This would ensure that the data collection process is as quick as possible for every federation in every year. Additionally, this would motivate the members to cooperate and act in the best interests of their respective federations by not, unnecessarily, withholding their privacy contributions to hinder achieving the privacy goals of their group, as per the deal finalized with $D$.

Let $R\in \mathbb{N}$ be the ``tolerance period''. For a member $p \in F$, we denote $d(m)^i_p$ to be the number of data points reported by $p$ in the $i^{th}$ round of data collection of year $m$ and we denote $\epsilon(m)^i_p$ to be the privacy parameter used by $p$ to obfuscate the data points in the $i^{th}$ round of data collection of year $m$. Let $T_m$ be the number of rounds of data collection needed in year $m$ by federation $F$ to achieve their privacy goal. We denote the total number of data points reported by $p$ in the year $m$ by $d(m)_p$, and observe that $d(m)_p=\sum_{i=1}^{T_m}d(m)^i_p$. Let $\epsilon(m)^P$ denote the value of the privacy parameter of the combined $k$RR mechanism of the collated data that $F$ needs, in order to successfully uphold the condition of the deal sealed with $D$.

\begin{definition}[Contributed privacy level]
For a given member $p\in F$, we define the \emph{contributed privacy level} of $p$ in year $m$ as
\[\epsilon(m)_p=\sum\epsilon(m)^i_p\].

\end{definition}

\begin{definition}[Privacy saving] For a given member $p\in F$, we define the \emph{privacy saving} of $p$ over a tolerance period $R$ (given by a set of some previous years), decided by the federation $F$, as
\[\Delta_p = \sum_{m\in R}\left(d(m)_p\epsilon^T_p-d(m)_p\epsilon(m)_p\right) \]
\end{definition}

\paragraph{Swift data collection:}
It is in the best interest of $F$, and all its members, to reduce the communication cost, time, and resources over the data collection rounds, and achieve the goal of $\epsilon^P$ as soon as possible, to catalyze the trade with $D$, and receive the financial revenue. We aim to capture this through our mechanism, and enable the members not to ``hold back'' their data well below their capacity.

To do this, in our model we design the Shapley valuation function, $\psi(.)$, such  that for $p\in F$, in year $m$, $\psi(N_p\epsilon(m)^{t+1}_p,d(m)_p,M)$ $=\psi(\epsilon(m)^{t}_p,d(m)_p,M)$, where $N_p \in \mathbb{Z}^+$ is the \emph{catalyzing parameter} of the data collection, decided by the federation, directly proportional to $\Delta_p$. In particular, for $p\in F$, and a tolerance period $R$ decided, in prior, by $F$, it is a reasonable approach to make $N_p\propto\Delta_p$, as this would mean that any member $p\in F$, reporting $d(m)_p$ data points, would need to use $N_p$ times higher value of $\epsilon$ in the $(t+1)^{st}$ round of data collection in the year $m$, as compared to that in the $t^{th}$ round for the same number of data points reported to get the same share of the benefit of the federation's overall revenue, where $N_p$ is decided by how much privacy savings $p$ has had over a fixed period of $R$. 

This is made to ensure that if a member of a federation has been holding back her information by using high values of privacy parameters over a period of time, she should need to compensate in the following year by helping to quicken up the process of data collection of her federation. This should motivate the members of $F$ to report their data with a high value of the privacy parameter in earlier rounds than later, staying within their privacy budgets, so that the number of rounds needed to achieve $\epsilon(m)^P$ is reduced.

\paragraph{Penalty scheme:} It is also desirable to have every member of any given federation to cooperate with the other members of the same federation, and facilitate the trading process in the best interest of the federation, to the best of their ability. That is why, in our mechanism, we incorporate an idea of a ``penalty scheme'' for the members of a federation who are being selfish by keeping a substantial gap between their maximum privacy threshold and their contributed privacy level,  wishing to enjoy benefits of the revenue at an unfair cost of other members providing information privatized with almost their maximum privacy threshold. To prevent such non-cooperation and attempted ``free ride'', we design a ``penalty scheme'' in the mechanism.

\begin{definition}[Free rider]
We call a certain member $p\in F$ to be a \emph{free rider} if $\Delta_p \geq \delta_F$, for some $\delta_F \in \mathbb{R}^+$. Here, $\delta_F$ is a threshold decided by the federation $F$ beforehand and informed to all the members of $F$.  
\end{definition}  

Thus, in the ideal case, every member of $F$ would have their privacy savings to be 0 if everyone contributed information to the best of their abilities, i.e., provided data obfuscated with their maximum privacy parameter.  But as a federation, a threshold amount of privacy savings is tolerated for every member. Under the ``penalty scheme'', if a certain member $p \in F$ qualifies as a free rider, she is excluded from the federation, and is given a demerit point by the federation, that can be recorded by a central system keeping a track of every member of every federation, preventing $p$ from getting admission to any other federation for her tendency to free ride. This would mean $p$ and has the responsibility of trading with the data consumer by herself. We could define the Shapley valuation function used to determine the share of $p$'s contribution such that $f^{-1}(\epsilon^T_p)< \psi(v,M)$, implying that the revenue to be received by $p$ dealing directly with $D$, providing one data point obfuscated with her maximum privacy threshold with respect to the privacy valuation function $f(.)$, would be giving a much lower revenue than what $p$ would receive being a member of federation $F$. \footnote{Here, $v(.)$ is the characteristic function of $\psi(.)$, depending on $\epsilon_p^T$.}

\begin{restatable}{theorem}{penaltyscheme}\label{th:1}
If the privacy valuation function used by the data consumer, $D$, is $f(m)=K_1(e^{K_2m}-1)$, in order to impose the \emph{penalty scheme} to any member $p\in F$ of a federation $F$, the Shapley valuation function, $\psi(.)$, chosen by $F$, must satisfy $\frac{\ln(\frac{\epsilon^T_p}{K_1}+1)}{K_2} < \psi\left(\epsilon^T_p, \frac{\ln(\frac{w^*\epsilon^T_p}{K_1}+K)}{K_2}\right)$, where $K=\frac{\sum_{p'\neq p \in F}d_{p'}\epsilon^T_{p'}}{K_1}+1$, $d_{\pi}$ is the number of data points reported by any $\pi\in F$, and $w^*$ is the suggested scaling parameter computed by $D$ to propose a realistic deal, as described in section 4.1.
\end{restatable}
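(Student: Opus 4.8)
The plan is to reduce the statement to the design inequality $f^{-1}(\epsilon^T_p) < \psi(v, M)$ that is already motivated in the paragraph preceding the theorem: the penalty scheme is meaningful precisely when the revenue $p$ would collect by trading directly with $D$ at her maximum threshold $\epsilon^T_p$ is strictly smaller than the Shapley share she would earn inside $F$. Everything then comes down to rewriting the two sides of this inequality explicitly for the exponential valuation function $f(m)=K_1(e^{K_2 m}-1)$. Since the theorem characterizes the required constraint on $\psi(.)$ rather than asserting an unconditional fact, the ``proof'' is essentially a derivation that unfolds $f^{-1}(\epsilon^T_p) < \psi(v, M)$ into the stated closed form.

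First I would invert $f$. Solving $\epsilon = K_1(e^{K_2 m}-1)$ for $m$ gives $f^{-1}(\epsilon)=\frac{\ln(\epsilon/K_1+1)}{K_2}$, which is well defined and strictly increasing because $f$ is a privacy valuation function and hence invertible. Evaluating at $\epsilon=\epsilon^T_p$ immediately yields $\frac{\ln(\epsilon^T_p/K_1+1)}{K_2}$, so the left-hand side of the claimed inequality is exactly $f^{-1}(\epsilon^T_p)$, the revenue $p$ obtains by dealing alone with a single data point at her maximum threshold.

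Next I would compute the revenue $M$ entering $p$'s Shapley share inside $F$. After the deal is sealed, $p$ contributes at the globally scaled level $w^*\epsilon^T_p$ (counted as one data point scaled by $w^*$), while every other member $p'\neq p$ contributes her full information limit $d_{p'}\epsilon^T_{p'}$, so the total information monetized by the federation is $w^*\epsilon^T_p + \sum_{p'\neq p}d_{p'}\epsilon^T_{p'}$. Applying $f^{-1}$ to this aggregate gives $M = \frac{\ln\!\big((w^*\epsilon^T_p+\sum_{p'\neq p}d_{p'}\epsilon^T_{p'})/K_1+1\big)}{K_2}$, and pulling out the constant $K=\frac{\sum_{p'\neq p}d_{p'}\epsilon^T_{p'}}{K_1}+1$ rewrites this as $M=\frac{\ln(w^*\epsilon^T_p/K_1+K)}{K_2}$, which is precisely the second argument of $\psi$ in the statement. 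Substituting $f^{-1}(\epsilon^T_p)$ and this $M$ into $f^{-1}(\epsilon^T_p) < \psi(\epsilon^T_p, M)$ produces the asserted inequality.

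The step I expect to be the main obstacle is justifying the closed form of $M$: the $k$RR aggregation formula \eqref{eq:maxinfothreshold} combines the individual parameters through a logarithm of a ratio, not additively, so I would need to argue that the quantity actually being priced is the additive information aggregate $\sum d_{p'}\epsilon^T_{p'}$ (each member's contribution being her information limit $d_{p'}\epsilon^T_{p'}$) rather than the raw combined parameter $\epsilon^P_F$. I would also need to reconcile the two- versus three-argument notation for $\psi(.)$ and make explicit, via the footnote's convention that $v(.)$ depends on $\epsilon^T_p$, that $p$'s own contribution is treated as a single data point scaled by the global factor $w^*$, so that the mixed units appearing in $w^*\epsilon^T_p + \sum_{p'\neq p}d_{p'}\epsilon^T_{p'}$ are consistent.
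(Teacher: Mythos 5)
Your proposal follows essentially the same route as the paper's own proof: invert $f$ to get $f^{-1}(\epsilon)=\frac{\ln(\epsilon/K_1+1)}{K_2}$, start from the design inequality $f^{-1}(\epsilon^T_p) < \psi(v,M)$, and rewrite $M$ as $f^{-1}$ applied to the additive aggregate $w^*\epsilon^T_p + \sum_{p'\neq p}d_{p'}\epsilon^T_{p'}$, absorbing the constant term into $K$. The ``obstacle'' you flag---that this additive treatment of the federation's information is inconsistent with the $k$RR aggregation formula \eqref{eq:maxinfothreshold}---is not resolved by the paper either; its proof silently substitutes $\epsilon^P_F$ by $C_0+w^*\epsilon^T_p$ with $C_0=\sum_{p'\neq p\in F}d_{p'}\epsilon^T_{p'}$, so your derivation matches the paper's argument, caveats included.
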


\begin{proof}
See Appendix~\ref{app:a}
\end{proof}

Imposing the ``penalty scheme'' is expected to drive every member of a given federation to be cooperating with the interests of the federation and all the other fellow members to the best of their abilities, preventing potential free riders.

 We show the pseudocode for the entire process in Algorithm \ref{alg:entireAlg} and describe the swift data collection and penalty scheme in Algorithm \ref{alg:swiftAlg} and \ref{alg:penaltyAlg}. 

\begin{algorithm}
\SetAlFnt{\small}
\textbf{Input:} Federation $F$, Data consumer $D$\;
\textbf{Output:} $\epsilon^P_F$ and $M$\;
$D$ broadcasts total budget $B$ and $f(.)$\;
Federation $F$ computes the $\epsilon^T_F=\sum_{i=1}^nd_{p_i}\epsilon^T_{p_i}$\;
$p_*$ places a bid to D to obtain revenue $M$\;
$F$ and $D$ ``seal the deal'' to determine the $\epsilon^P_F$ and $M$\;
\While{$\epsilon_F \leq \epsilon^P_F$ and $t \leq T$}{
\Call{SwiftDataCollection}{$F$, $\epsilon^P_F$}\;
 $p_*$ computes the overall privacy $\epsilon_F$
}
\eIf{$\epsilon_F \geq \epsilon^P_{F_i}$}
{
 F receives the revenue $M$\;
 $p_*$ computes the Shapley value $\psi_i(v, M)$\;
 $p_i$ get their share of the revenue $M$
 }
{
 deal fails
}  
\caption{Federation based data trading algorithm}\label{alg:entireAlg}
\end{algorithm}

\begin{algorithm}
\SetAlFnt{\small}
\textbf{Input:} $F=\{p_1,\ldots,p_{n_F}\}$,$\epsilon^P_F$\;
\textbf{Output:} $\epsilon(m)^t_p$\;
\SetKwFunction{Fmain}{SwiftDataCollection}
\SetKwProg{Fn}{Function}{:}{}
\Fn{\Fmain{$F$,$\epsilon^P_F$}}{ 
\While{$i \leq n_F$}
    {
        Compute $\Delta_{p_i}$\;
        Compute the catalyzing parameter $N_{p_i}$\;
        Determine the $\epsilon(m)_{p_i}^{t}=N_{p_i}\epsilon(m)_{p_i}^{t-1}$
    }
}

\caption{Swift data collection algorithm}\label{alg:swiftAlg}
\end{algorithm}

\begin{algorithm}
\SetAlFnt{\small}
\textbf{Input:} $F=\{p_1,\ldots,p_{n_F}\}$,$\Delta_{F}=\{\Delta_{p_1},\ldots ,\Delta_{p_{n_F}}\}$, $\delta_F$\;
\textbf{Output:} Updated $F$\;
\While{$i \leq n_F$}
{
    \If{$\Delta_{p_i} \geq \delta_F$}
    {
        $ F\setminus \{i\}$
    }
}
\caption{Penalty scheme}\label{alg:penaltyAlg}
\end{algorithm}

\section{Experimental results}

\subsection{Experimental environments}

In this section, we show some  experiments that support the claim that proposed method succeeds to obtain the promised $\epsilon$ and reduce the computation time for Shapley value evaluation. The number of data providers constituting the federation is set to 25, 50, 75, and 100, respectively. The value of $\epsilon^T_p$ is selected from the normal distribution between 1 and 10 with mean 5 and standard deviation 1 independently for all participants $p$ in the federation. The experimental environment is a Intel(R) i5-9400H CPU and 16 GB of memory.

\subsection{Number of rounds needed for data collection}
\begin{figure}[htbp]
\centerline{\includegraphics[width=1\textwidth]{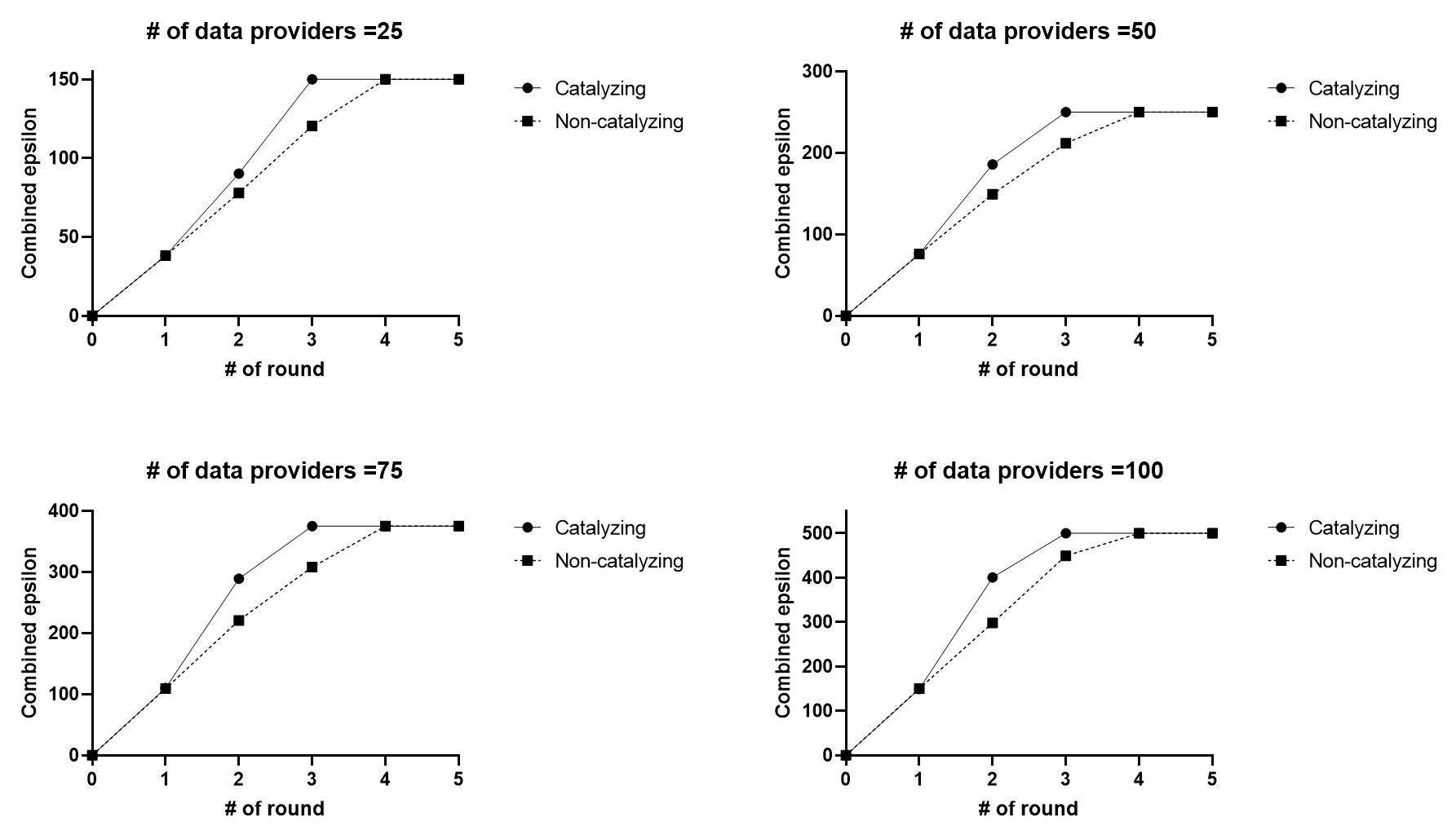}}
\caption{Experimental results for combined $\epsilon $. Combined $\epsilon$ refers to the amount of information provided by the data providers. }
\label{fig2}
\end{figure}
Achieving the $\epsilon_F^P$ is the key for the participation of $F$ in the  data trading. If $\epsilon_F^P$ is not achieved as the collated information level for the federation, there is no revenue from the data consumer. Thus, it is important to encourage the data providers to report sufficient data in order to reach the goal of the deal sealed with the data consumer. The swift data collection is a way to catalyze the process of obtaining data from the members of every federation $F$, minimising the number of rounds of data-collection, to achieve $\epsilon_F^P$.  Furthermore, we set  $N_p=\frac{\Delta_p}{d(m)_p\epsilon^T_p}$ for a certain member $p$ in federation $F$, to motivate the data providers who have larger  privacy savings to provide more information per round. 

In the experiment, $\epsilon_F^P$ is set to be 125, 250, 375 and 500, respectively. Data provider $p$ determines $\epsilon(m)^t_p$ randomly in first round, and then computes $\epsilon(m)^t_p$ according to $N_p$, for every $p$ in the federation. We compare  two cases, the catalyzing method and the non-catalyzing method.

As illustrated in figure \ref{fig2}, the experimental results show that both catalyzing data collection and its non-catalyzing counterpart achieve the promised epsilon values within 5 rounds, but it can be seen that the catalyzing method achieves $\epsilon_F^P$ in an earlier round because data providers decide the privacy level used to obfuscate their data with, considering their privacy savings, resulting in a \emph{swift data collection}.

\subsection{Number of free riders by penalty scheme}

\begin{figure}[htbp]
\centerline{\includegraphics[width=1\textwidth]{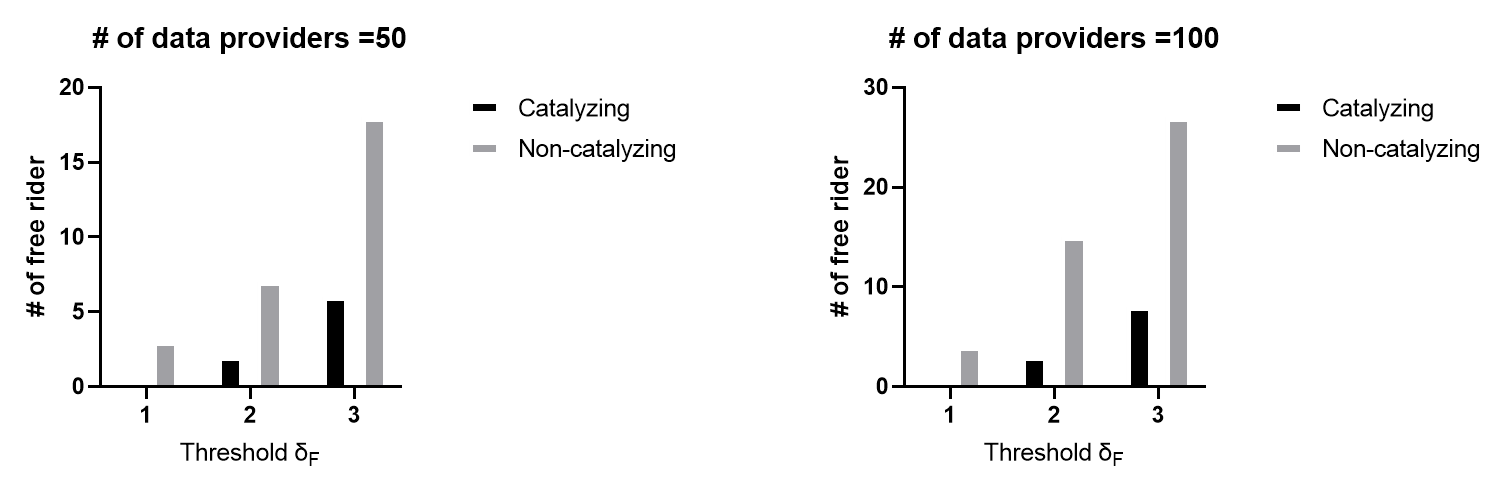}}
\caption{Experimental results for number of free riders. We compared the number of free riders incurred by the penalty scheme in catalyzing and non-catalyzing methods for cases where the number of data providers is 50 and 100. }
\label{fig3}
\end{figure}

The penalty scheme that prevents free riders is based on the premise that trading data by participating in a federation is more beneficial than trading data directly with data consumers (Theorem \ref{th:1}). We evaluated the number of free riders in the catalyzing and non-catalyzing methods according to the increase of the threshold $\delta_F$ in the experiment.

As shown in the figure \ref{fig3}, we can see that the number of free riders increases in both techniques as the threshold value $\delta_F$ is increased to 1,2,3. However, the non-catalyzing method makes more free riders than the catalyzing method that changes the amount of provided information according to privacy saving $\Delta_P$. In other words, the catalyzing method and penalty scheme help to keep members in the federation by inducing them to reach the target epsilon in an earlier time.

\subsection{Reduced Shapley value computation time }
As mentioned in section 4.2, one of the limitations of Shapley value evaluation is to compute it for all combinations of subsets. Through this experiment, we demonstrate that the proposed pruning technique reduces the computation time for calculating the Shapley values. We compared the computation times of the proposed method with brute force method that calculates all the cases by increasing the number of data providers in the federation, by 3, from 15 to 27.

\begin{table}
\centering
\caption{Computation time of brute force and proposed pruning method}\label{tab1}
\begin{tabular}{|c|c|c|}
\hline
\bfseries \# of data providers&  \bfseries brute force(Sec) & \bfseries pruning method (Sec)\\
\hline
15 &  0.003 & 0.0007\\
18&  0.02 & 0.001\\
21 & 0.257 & 0.0049\\
24 & 2.313 & 0.009\\
27 & 19.706 & 0.019\\
\hline
\end{tabular}
\end{table}
As shown in the table, the computation time of Shapley value evaluation increases exponentially because the total number of subsets to be considered does the same. The proposed method can calculate the Shapley values in less time by removing unnecessary computations.

\section{Conclusion}
With the spreading  of data-driven decision making practices, the interest in personal data is increasing. The data market gives a new opportunity to trade personal data, but a lot of research is still needed to solve privacy and pricing issues. 
In this paper, we have considered a data market environment in which data providers form federations and protect their data with the locally differentially private $k$RR mechanism, and we have proposed a pricing and earnings-distribution method. Our method integrates different data providers' values of the privacy parameter $\epsilon$ and combines them to obtain the  privacy parameter  of the federation. The received earning is distributed using the Shapley values of the members, which guarantees the \emph{Pareto efficiency} and \emph{symmetry}. In addition, we have proposed a swift data collection mechanism and a penalty scheme to catalyze the process of achieving the target amount of information quickly, by penalizing the free riders who do not cooperate with their federation's best interest.

Our study has also disclosed new problems that need further investigation. Firstly, we are assuming that the data providers keep the promise for the ``seal the deal", but, in reality, the data providers can always add more noise than what they promised. We plan to study how to ensure that data providers uphold their data trading contracts. Another direction for future work is considering more  differential privacy mechanisms, other than $k$RR. 

\section*{Acknowledgement}
This work was supported by the European Research Council (ERC) project HYPATIA under the European Union’s Horizon 2020 research and innovation programme. Grant agreement n. 835294.

%
%
%
 \bibliographystyle{splncs04}
 \bibliography{reference}
\begin{subappendices}
\renewcommand{\thesection}{\Alph{section}}%

\section{Proofs}\label{app:a} 

\penaltyscheme*
\begin{proof}
Using the privacy valuation function $f(m)=K_1(e^{K_2m}-1)$, we have $f^{-1}(\epsilon)=\frac{\ln(\frac{\epsilon}{K_1}+1)}{K_2}$. Let $p$ be an arbitrary member of $F$ with a maximum privacy threshold $\epsilon^T_p$. Therefore, in order to impose a penalty scheme on $p$, it needs to be ensured that
\begin{flalign}
    \frac{\ln(\frac{\epsilon^T_p}{K_1}+1)}{K_2} < \psi(v, M) \nonumber&&&\\\nonumber
    &\\\nonumber\implies
    \frac{\ln(\frac{\epsilon^T_p}{K_1}+1)}{K_2} < \psi(v, f^{-1}(\epsilon^P_F))&&\\\nonumber\nonumber
    &\\\nonumber
    [w^*\in[0,1]\text{ is the scaling parameter chosen by $D$ and } \epsilon^P_F=w^*\epsilon^T_F]&\\\nonumber\nonumber
    &\\\nonumber\implies 
    \frac{\ln(\frac{\epsilon^T_p}{K_1}+1)}{K_2} < \psi\left(v, \frac{\ln(\frac{\epsilon^P_F}{K_1}+1)}{K_2}\right)&&\\\nonumber
    &\\\nonumber\implies 
    \frac{\ln(\frac{\epsilon^T_p}{K_1}+1)}{K_2} < \psi\left(v, \frac{\ln(\frac{C_0+w^*\epsilon^T_p}{K_1}+1)}{K_2}\right)&&\\\nonumber
    &\\ \nonumber
    [\text{ where }C_0=\sum_{p'\neq p \in F}d_p'\epsilon^T_{p'}\text{ is a constant}]&\\\nonumber\nonumber
    &\\ \nonumber \implies
    \frac{\ln(\frac{\epsilon^T_p}{K_1}+1)}{K_2} < \psi\left(v, \frac{\ln(\frac{C_0+w^*\epsilon^T_p}{K_1}+1)}{K_2}\right)&&\\\nonumber
    &\\ \nonumber
    \frac{\ln(\frac{\epsilon^T_p}{K_1}+1)}{K_2} < \psi\left(v, \frac{\ln(\frac{w^*\epsilon^T_p}{K_1}+K)}{K_2}\right) \label{penaltycondi}&&\\
    &\\ \nonumber
    [\text{for the constant }K=\frac{C_0}{K_1}+1.]
\end{flalign}

\end{proof}

\end{subappendices}
\end{document}